\documentclass[10pt]{article}
\usepackage[utf8]{inputenc}

\usepackage{adjustbox}

\usepackage{amssymb,amsmath,amsthm}
\usepackage{authblk}
\usepackage{bm}
\usepackage{breqn}
\usepackage[justification=centering]{caption}
\usepackage{color}
\usepackage{comment}
\usepackage{dsfont}
\usepackage{enumitem}
\usepackage{geometry}
\usepackage{graphicx}
\usepackage{xurl} \usepackage{hyperref}
\usepackage{mathtools}
\usepackage[round]{natbib}
\usepackage{tikz}
\usetikzlibrary{positioning,fit,calc,backgrounds,arrows.meta}
\usepackage{subfig}

\hypersetup{
    colorlinks=true,
    pdftitle={A note on Bayes' rule, conditional densities, and recent criticisms of Bayesian inference}
    }

\newcommand{\I}{\mathds{1}}
\DeclareMathOperator{\E}{\mathbb{E}}
\newcommand{\der}{\operatorname{d\!}{}}

\let\P\relax
\DeclareMathOperator{\P}{\mathbb{P}}
\newcommand{\N}{\mathbb{N}}
\DeclareMathOperator{\Normal}{\mathcal{N}}

\renewcommand{\Re}{\mathbb{R}}

\DeclareMathOperator{\Exp}{\text{Exp}}
\DeclareMathOperator{\GammaDist}{\text{Gamma}}
\DeclareMathOperator{\Unif}{Unif}

\theoremstyle{plain}
\newtheorem{proposition}{Proposition}

\newcommand{\mybox}[1]{%
  \begin{center}
  \fbox{%
    \begin{minipage}{0.9\linewidth}{
    \footnotesize
    #1
    }
    \end{minipage}%
  }%
  \end{center}
}

\title{A note on conditional densities, Bayes' rule, and recent criticisms of Bayesian inference}
\author[1,2,*]{Alex Yan}
\author[1,2]{Cathal Mills}
\author[3]{Augustin Marignier}
\author[1,2]{Younjung Kim}
\author[1,2,*]{Ben Lambert}
\affil[1]{Department of Statistics, University of Oxford, Oxford, United Kingdom}
\affil[2]{Pandemic Sciences Institute, University of Oxford, Oxford, United Kingdom}
\affil[3]{Department of Earth Sciences, University of Oxford, Oxford, United Kingdom}
\affil[*]{alexander.yan@stats.ox.ac.uk, ben.lambert@stats.ox.ac.uk}
\date{\vspace{-5ex}}

\begin{document}

\maketitle


\begin{abstract}
When performing Bayesian inference, we frequently need to work with conditional probability densities. For example, the posterior function is the conditional density of the parameters given the data. Some might worry that conditional densities are ill-defined, considering that for a continuous random variable $Y$, the event $\{Y=y\}$ has probability zero, meaning the formula $\P(A|B)=\P(A\cap B)/\P(B)$ is inapplicable. 
In reality, when we work with conditional densities, we never condition directly on the zero-probability event $\{Y=y\}$; rather, we first condition on the random variable $Y$, and then we may plug in an observed value $y$. The first purpose of our article is to provide an exposition on conditional densities that elaborates on this point. While we have aimed to make this explanation accessible, we follow it with a roadmap of the measure theory needed to make it rigorous.
A recent preprint \citep{Mosegaard2024} has expressed the concern that probability densities are ill-defined and that as a result Bayes' theorem cannot be used, and they provide examples that allegedly demonstrate inconsistencies in the Bayesian framework. The second purpose of our article is to investigate their claims. We contend that the examples given in their work do not demonstrate any inconsistencies; we find that there are mathematical errors and that they deviate significantly from the Bayesian framework.
\end{abstract}
\tableofcontents

\section{Introduction}

Bayesian inference is a statistical approach to learning from data, in which we start with prior beliefs about a system and use observed data to update these beliefs in a principled way according to Bayes' rule. This allows us to fit our models to data and subsequently perform inference tasks like estimation and prediction, while ensuring that we get uncertainty quantification. Bayesian inference is used widely across scientific and data-driven disciplines, including to solve inverse problems in the physical sciences.

In a recent preprint, \cite{Mosegaard2024} claims to ``demonstrate mathematical inconsistency and logical acausality of commonly used Bayesian methods".  At first, their work (hereinafter ``MC") seems to have found fundamental flaws in the Bayesian paradigm, despite the fact that the Bayesian paradigm is a well-established approach for statistical inference. Our article examines the claims of MC and finds that Bayesian inference is free of the inconsistency and so-called ``logical acausality" that MC appears to have uncovered. 

A central claim of MC's manuscript is that conditional densities are ill-defined and that Bayes' theorem does not work in non-discrete spaces.\footnote{In their abstract, they assert that ``the notion of conditional densities is inadmissible" and that ``Bayes theorem itself is inadmissible in non-discrete spaces". In MC, the term “inadmissible” is not used in its standard technical sense from statistical decision theory. Likewise, “inconsistency” and “logical acausality” are not defined as formal concepts.} This claim is incorrect. Notions of conditioning are made rigorous with measure theory. Measure theory was developed in the first half of the 20th century; it also formalises the notions of probability and random variables. To support their claim, MC cites the Borel--Kolmogorov paradox \citep[also known as Borel's paradox;][]{borel1909,Kolmogorov1933}. The Borel--Kolmogorov paradox is of little relevance once we are equipped with a proper theory of conditional distributions. The Borel--Kolmogorov paradox concerns conditioning on an event of zero probability, whereas the theory of conditional distributions concerns conditioning on a random variable.

In Section \ref{section:2}, we shall explain that Bayes' theorem does work in non-discrete spaces. We shall provide an exposition explaining what conditional densities are, the conditions under which a conditional density exists (namely, that the joint density must exist) and to what extent it is uniquely defined. As part of this, in Section \ref{subsection:BK} we shall elaborate on why the Borel--Kolmogorov paradox is not a problem here.
We believe that Section \ref{section:2} may be of interest to a wider readership, so we have aimed for this section to make sense independently of the rest of this article. An understanding of measure theory is not necessary in order for us to work with conditional densities when performing Bayesian inference in practice, and correspondingly we have avoided involving measure theory until Section \ref{subsection:measuretheory}, where we sketch a roadmap of the relevant measure theory and point to textbook references, except for some technical clarifications that the reader may wish to ignore along the way.

In addition to the theoretical concern about whether conditional densities are well-defined, MC also provides a number of examples, detailed in their appendices, each of which appears to demonstrate inconsistencies that arise from Bayesian inference. In Sections \ref{section:3} and \ref{section:4}, we shall address these examples. Our findings are that: the example in their Appendix A deviates from the Bayesian framework in a fundamental way, making an invalid heuristic argument to find a conditional distribution; the examples in Appendices B--D each make the same critical mistake when attempting to reparametrise the data space; the examples in Appendix F do not perform Bayesian inference correctly; the example in Appendix G does not demonstrate any genuine inconsistency. Section \ref{section:3} has a loosely unifying theme that it is the modeller's job to choose appropriate models. Section \ref{section:4} conveys the message that Bayesian inference remains invariant to reparametrisations of the data space.

Finally, in Section \ref{section:5}, we shall mention some genuine advantages and disadvantages of using Bayesian inference. 
Like Section \ref{section:2}, this section may be read independently from the rest of the article.

\section{Conditional densities and Bayes' rule} \label{section:2}
\subsection{What is Bayes' rule?}
In the Bayesian framework, both parameters $\theta$ and data $Y$ are modelled as random. Their joint distribution is specified by a probability distribution $\pi(\theta)$ over the parameters, known as the prior distribution, and a conditional distribution for data given parameters, $f(y|\theta)$, known as the likelihood. Then, the conditional distribution for the parameters given the data can be obtained using the formula:\footnote{Here, the proportionality symbol denotes that we have dropped a normalising constant that depends only on $y$ and not on $\theta$.}
\begin{equation}
    \pi(\theta|y)\propto \pi(\theta)f(y|\theta),
    \label{eqn:bayesrule}
\end{equation}
and this is known as the posterior distribution.

The term ``Bayes' rule" refers to two concepts:\footnote{In addition to these, ``Bayes rule" (written without the apostrophe) may also refer to a Bayes decision rule, such as a Bayes classifier.}
\begin{enumerate}
    \item The methodological principle that the correct way to update our prior beliefs in light of new observations is to condition on the observations;
    \item A mathematical formula, such as (\ref{eqn:bayesrule}), that describes how conditioning works.
\end{enumerate}
We shall use the term ``Bayes' theorem" for the latter, and this will be the topic of our discussion.

In some texts, the single letter $p$ is used to denote all probability distributions, and Bayes' theorem may be written
$$p_{\theta|Y}(\theta|y)\propto p_{\theta}(\theta)p_{Y|\theta}(y|\theta).$$ We shall also use this practice. (Some texts omit the subscripts entirely and write $p(\theta|y)\propto p(\theta)p(y|\theta)$ for notational convenience. With this overloaded notation, which probability distribution $p$ is referring to is implied by the argument inside the brackets.)

An important point is that each of the functions involved in Equation (\ref{eqn:bayesrule}) could denote either a probability mass or a probability density, depending on whether their (first) argument is discrete or continuous.
\begin{itemize}
    \item If $\theta$ and $Y$ are discrete random variables, then $\pi(\theta)$ denotes the probability mass of $\theta$, $\pi(\theta|y)$ denotes the conditional probability mass of $\theta$ given $Y$, and $f(y|\theta)$ denotes the conditional probability mass of $Y$ given $\Theta$. We come to this in Section \ref{subsection:discrete}.
    \item If $\theta$ and $Y$ are continuous random variables, then $\pi(\theta)$ denotes the probability density of $\Theta$, $\pi(\theta|y)$ denotes the conditional probability density of $\Theta$ given $Y$ (when it exists), and $f(y|\theta)$ denotes the conditional probability density of $Y$ given $\Theta$ (when it exists). We come to this in Section \ref{subsection:continuous}.
    \item It may also be that $\theta$ or $Y$ are random vectors. The random vectors may have a combination of discrete and continuous elements. The dimension of $\theta$ may not be fixed (for instance, in a mixture model with a variable number of components that may contribute to the mixture), and the dimension of $Y$ may not be fixed (for instance, if certain event times are observed but the number of events is itself random). Nonetheless, conditional densities exist whenever corresponding joint densities exist. Then, Bayes' theorem still holds. We come to this in Section \ref{subsection:towards}.
\end{itemize}

Our exposition in this section justifies the way conditional densities and Bayes' theorem are used in practice. We hope this will reassure the reader that they may continue using Bayes' theorem without worrying about issues raised by MC.

\subsection{Conditioning on an event}
Let us begin with the concept of conditioning on an event. Conditioning on an event is well-defined as long as that event has positive probability.

Let $B$ be an event with positive probability, $\P(B)>0$. If $A$ is another event, then the conditional probability of $A$ given $B$ is the probability that both events occur divided by the probability of $B$:
\begin{equation}
    \P(A|B)=\frac{\P(A\cap B)}{\P(B)}.
    \label{eqn:condprobevents}
\end{equation}

A direct consequence of this definition is that if $A$ and $B$ both have positive probability, then
\begin{equation}
    \P(A|B)=\frac{\P(B|A)\P(A)}{\P(B)}.
    \label{eqn:bayesthmevents}
\end{equation}
This identity is a version of Bayes' theorem for events; it is distinct from versions of Bayes' theorem for random variables.

It should be stressed that $\P(A|B)$ is only defined when $B$ has positive probability, otherwise we end up with zero divided by zero on the right-hand side of (\ref{eqn:condprobevents}), which is nonsensical. The Borel--Kolmogorov paradox arises when one tries to condition on a zero-probability event. In essence, the paradox is that if $\P(B)=0$, one may try to use heuristic arguments to compute $\P(A|B)$, but different heuristic arguments will lead to different resulting values. One attempt at a definition might be to choose a sequence of nested positive-probability events $B_1\supset B_2\supset B_3\supset\dots$ that serve as better and better approximations to $B$, and define $\P(A|B)$ as the limit of conditional probabilities $\lim_{n\to\infty} \P(A|B_n)$. Unfortunately, in general this does not work, because different choices of $B_1,B_2,B_3,\dots$ may give different limits.

\subsection{Conditioning with discrete random variables} \label{subsection:discrete}
If $X$ and $Y$ are both discrete random variables, then the conditional distribution of $X$ given $Y$ is given by
\begin{equation}
    p_{X|Y}(x|y)=\frac{p_{X,Y}(x,y)}{p_Y(y)}.
    \label{eqn:conddistdiscrete}
\end{equation}
Here $p_{X,Y}$ and $p_Y$ denote probability mass functions, and this formula holds for any $y$ such that $p_Y(y)>0$. This is nothing new; if we like, we may think of (\ref{eqn:conddistdiscrete}) as notational shorthand for
$$
    \P(X=x|Y=y) = \frac{\P(X=x,Y=y)}{\P(Y=y)},
$$
which comes from (\ref{eqn:condprobevents}), bearing in mind that $\{X=x\}$ and $\{Y=y\}$ are both events.

It follows that for $y$ with $0<p_Y(y)<\infty,$
\begin{equation}
    p_{X|Y}(x|y)=
    \begin{cases}
        \frac{p_{Y|X}(y|x)p_X(x)}{p_Y(y)}, &\text{when } p_X(x)>0,\\
        0, &\text{otherwise}.
    \end{cases}
    \label{eqn:bayesthmdiscrete}
\end{equation}
This can be regarded as Bayes' theorem for discrete random variables. This is a simple restatement of (\ref{eqn:bayesthmevents}), but now we have the additional interpretation that for each $y$, we have specified a (conditional) distribution over $x$. The special case $p_X(x)=0$ may be understood implicitly\footnote{The numerator $p_{Y|X}(y|x)p_X(x)$ would be zero regardless of $p_{Y|X}(y|x)$.}; then noting that $p_Y(y)$ is just a normalising constant independent of $x$, we can also write (\ref{eqn:bayesthmdiscrete}) in the more compact form
\begin{equation}
    p_{X|Y}(x|y) \propto p_{Y|X}(y|x)p_X(x),
    \label{eqn:bayesrulediscrete}
\end{equation}
as we saw in Equation (\ref{eqn:bayesrule}).

\subsection{Conditioning with continuous random variables} \label{subsection:continuous}
Suppose now that $X$ and $Y$ are both continuous random variables on $\Re$. In this scenario, for any possible value $y$, the event $\{Y=y\}$ has probability zero. We cannot directly condition on zero-probability events like $\{Y=y\}$; nevertheless, we can condition on the random variable $Y$ to obtain the conditional density of $X$ given $Y$.

For this to work, we need that $X$ and $Y$ have a joint density $p_{X,Y}(x,y)$ on $\Re\times\Re$. Then, the marginal density of $Y$ can be obtained as $p_Y(y)=\int_\Re p_{X,Y}(x,y) \der x$, and the conditional density of $X$ given $Y$, denoted $p_{X|Y}$, can be defined by (\ref{eqn:conddistdiscrete})
whenever $0<p_Y(y)<\infty$.
This is to say that the formula for the continuous case is symbolically identical to the formula for the discrete case, but its interpretation is different, as now $p_Y$ and $p_{X,Y}$ are probability density functions, not probability mass functions.

It follows that we have a version of Bayes' theorem for continuous random variables, sometimes called the generalised Bayes' theorem, and the formula is exactly the same as before (\ref{eqn:bayesthmdiscrete}, \ref{eqn:bayesrulediscrete}), but its interpretation is different.

\paragraph{Example: exponential waiting times.}
Let $\theta>0$ and $Y>0$ be random variables whose joint distribution is defined by prior distribution $\theta\sim\GammaDist(2,2)$ and likelihood $Y|\theta\sim\Exp(\theta)$. $Y$ might represent the waiting time until a bus arrives, or the time until failure for a light bulb, or the time taken for a particle to decay. We can write down the corresponding densities: $$p_{\theta}(\theta)=4\theta e^{-2\theta} \quad\text{and}\quad p_{Y|\theta}(y|\theta)=\theta e^{-\theta y}.$$ 
We know that $\theta$ and $Y$ have a joint density on $\Re_+^2$; it is given by $p_{\theta,Y}(\theta,y)=p_\theta(\theta)p_{Y|\theta}(y|\theta)$. Therefore, the version of Bayes' theorem for continuous random variables can be used here, and we get that
\begin{align*}
    p_{\theta|Y}(\theta|y)&\propto p_\theta(\theta)p_{Y|\theta}(y|\theta)\\
    &\propto\theta^2e^{-(2+y)\theta}.
\end{align*}
This can be recognised as the density of the $\GammaDist(3,2+y)$ distribution, up to a normalising constant, and with the correct normalising constant the posterior density is
\begin{equation}
    p_{\theta|Y}(\theta|y)=\frac{(y+2)^3}{2}\theta^2 e^{-(2+y)\theta}.
    \label{eqn:examplebayesrule} 
\end{equation}

\subsubsection{Why is the Borel--Kolmogorov paradox not a problem here?}
\label{subsection:BK}
The Borel--Kolmogorov paradox only appears when one tries to condition on a zero-probability event (without reference to a random variable).

We avoid the Borel--Kolmogorov paradox by equipping the naked event $\{Y=y\}$ with the random variable $Y$. When we work with a conditional density $p_{X|Y}$, we do not attempt to condition directly on the zero-probability event $\{Y=y\}$. Rather, we condition on the random variable $Y$ to obtain the function $p_{X|Y}$; then, when a particular value $y$ is observed, we may evaluate that function at the value $y$. Bayes' rule also works because it concerns conditioning on random variables, not on zero-probability events.

The following two points emphasise the difference between conditioning on a random variable and conditioning on a zero-probability event.

\paragraph{The definition of conditional density is well-motivated.} It is interesting to note that we may motivate the definition (\ref{eqn:conddistdiscrete}) in the continuous case by requiring that conditioning on a continuous random variable agrees with conditioning on a positive-probability event:
\begin{itemize}
    \item On the one hand, we ought to be able to integrate the conditional density $p_{X|Y}(x|y)$ to obtain a conditional cumulative distribution function
    \begin{equation}
        F_{X|Y}(x|y)=\int_{-\infty}^{x}p_{X|Y}(x'|y)\der x'.
        \label{eqn:conditionalcdfA}
    \end{equation}
    \item On the other hand, (\ref{eqn:condprobevents}) tells us how to condition on the event that $Y$ lies in a small interval $[y-\delta y, y+\delta y]$:
    \begin{align}
        \P(X\leq x|Y\in[y-\delta y,y+\delta y])
        &=\frac{\P(X\leq x,Y\in[y-\delta y,y+\delta y])}{\P(Y\in[y-\delta y,y+\delta y])} \notag \\
        &=\frac{\int_{y-\delta y}^{y+\delta y}\int_{-\infty}^{x} p_{X,Y}(x',y') \der x' \der y'}{\int_{y-\delta y}^{y+\delta y} p_Y(y') \der y'}. \label{eqn:conditionalcdfB}
    \end{align}
\end{itemize}
For small $\delta y$, we might like (\ref{eqn:conditionalcdfB}) to be a good approximation for (\ref{eqn:conditionalcdfA}), and that they agree in the limit as $\delta y\to0$.
This does happen under suitable regularity conditions: assume that on the right-hand side of (\ref{eqn:conditionalcdfB}) the integrands $\int_{-\infty}^{x} p_{X,Y}(x',y') \der x'$ and $p_Y(y')$ are continuous in $y'$. Then, it follows that as the right-hand side of (\ref{eqn:conditionalcdfB}) converges to
${\int_{-\infty}^{x} p_{X,Y}(x',y) \der x'}/{p_Y(y)}$ as $\delta y\to0$, whenever $p_Y(y)>0$. Our definition of the conditional density (\ref{eqn:conddistdiscrete}) ensures this is equal to the right-hand side of (\ref{eqn:conditionalcdfA}).

Thus, we have been able to motivate the definition of the conditional density of $X$ given $Y$ by considering positive-probability events $\{Y\in[y-\delta y,y+\delta y]\}$ which converge to the zero-probability event $\{Y=y\}$. In some sense, when the random variable $Y$ is declared, events $\{Y\in[y-\delta y,y+\delta y]\}$ become a natural choice for positive-probability events that converge to $\{Y=y\}$.\footnote{An equally natural choice would be $\{Y\in[y-\delta y, y+2\delta y]\}$. It is important that these events are composed of level sets of $Y$; what is not important is how fast we are approaching $y$ from above and below. Changes in the parametrisation of $Y$ only correspond to changes in how fast these events approach $y$ from above and below, so the choice of events remains natural under reparametrisation (informally speaking).} By contrast, there is no natural choice for positive-probability events that converge to an arbitrary zero-probability event. This may give a little intuition as to why conditioning on a random variable is qualitatively different from conditioning on a zero-probability event.

\paragraph{The notion of $p_{X|A}$ is meaningless if $A$ is a zero-probability event.} To further emphasise the semantic difference between conditioning on a random variable and conditioning on a zero-probability event, consider the following. An attempt to define the conditional density of $X$ given zero-probability event $A$ only declares one density $x\mapsto p_{X|A}(x)$. In contrast, the conditional density of $X$ given random variable $Y$ declares a whole collection of densities $x\mapsto p_{X|Y}(x|y)$, one for each value of $y$. These latter conditional densities can be combined together to retrieve facts about positive probabilities; for instance, we can recover the fact that the joint distribution function of $(X,Y)$ is
$$\P(X\leq x, Y\leq y)=\int_{-\infty}^{y}\left(\int_{-\infty}^{x}p_{X|Y}(x'|y') \der x' \right)p_Y(y')\der y'.$$
However, there is no way to retrieve meaningful statements about positive probabilities from any notion of $p_{X|A}(x)$. This is a reason that we should not even hope to define it. 

Indeed, looking back to the conditional density given by Equation (\ref{eqn:examplebayesrule}) in our example of exponential waiting times, we see that the equation declares a whole collection of densities, one for each possible realisation of $Y$. This is conceptually distinct from what happens 
in Borel’s original example (conditioning on a great arc) and in MC’s tomographic example
(conditioning on $\{v_1 = v_2\}$; we shall discuss this example in Section \ref{subsection:tomographic}).

\subsubsection{Is the conditional density uniquely defined?}
\label{subsection:unique}
The conditional density $p_{X|Y}(x|y)$ is essentially uniquely defined, when it exists.

Note that different density functions $f(z)$ can define the same distribution over $Z$. For example, the standard normal distribution has density function
\begin{equation}
    f(z)=\frac{1}{2\pi}e^{-z^2/2}.
    \label{eqn:normalA}
\end{equation}
However, the density function
\begin{equation}
    \tilde{f}(z)=\begin{cases}
    \frac{1}{2\pi}e^{-z^2/2}, &\text{for }z\neq42,\\
    0, &\text{for }z=42,
\end{cases}
    \label{eqn:normalB}
\end{equation}
describes exactly the same distribution. When we want to calculate probabilities like $\P(Z>10)$, we get the same result regardless of which density function we chose to use. This is because when we integrate over $z$, these trivial differences are inconsequential. Expressions (\ref{eqn:normalA}) and (\ref{eqn:normalB}) encode exactly the same information about the distribution of random variable $Z$. This reveals that density functions are not uniquely defined, but they are ``essentially" uniquely defined.

In a somewhat similar sense, the conditional density $p_{X|Y}(x|y)$ is not uniquely defined, but it is ``essentially" uniquely defined. This is best illustrated by an example.

\paragraph{Example: exponential waiting times, revisited.} We may slightly tweak (\ref{eqn:examplebayesrule}) and define a slightly different collection of distributions by
\begin{equation}
    \tilde{p}_{\theta|Y}(\theta|y)=
    \begin{cases}\frac{(y+2)^3}{2}\theta^2 e^{-(2+y)\theta}, &\text{if } y\neq42,\\
    e^{-\theta}, &\text{if } y=42.\end{cases}
    \label{eqn:tweak}
\end{equation}
Here, among all possible $y>0$ we have picked an arbitrary value of $y$ and assigned to it an arbitrary distribution over $\theta$. Now, $\tilde{p}_{\theta|Y}$ is a valid but different version of the conditional density of $\theta$ given $Y$. In fact, we may tweak the distributions over $\theta$ defined by $(p_{\theta|Y}(\theta|y))_{y\in Y}$ at more values of $y$, as long as the tweaks we make are confined to a set $S\subset\Re$ such that $\P(Y\in S)=0$.

We appreciate that one may intuitively think that these different versions of the conditional density imply inconsistencies, but we emphasise that these tweaks are trivial, since the different versions still agree at almost every $y$. [To be precise, the distributions described by the densities agree for almost all $y$ with respect to $\P\circ Y^{-1}$.] Knowing the distribution of $Y$ already, the expressions (\ref{eqn:examplebayesrule}) and (\ref{eqn:tweak}) encode exactly the same information about the conditional distribution of $\theta$ given $Y$.\footnote{And if we want to be careful about the fact that there are different versions of the conditional density, we can interpret Equations (\ref{eqn:bayesthmdiscrete}, \ref{eqn:bayesrulediscrete}) as equations that hold for each version, for almost all $y$ with respect to $\P\circ Y^{-1}$.}

Moreover, the practitioner is able to get by without worrying about these subtleties in practice, since in reality by following Bayes' rule, they will end up with the more sensible-looking (\ref{eqn:examplebayesrule}). Even if another practitioner uses the somewhat daft expression (\ref{eqn:tweak}), there is no chance that $Y$ will realise a value at which the two practitioners will obtain conflicting posteriors.

Up to now, some readers may have been wondering, instead of getting the conditional density of $X$ given a zero-probability event $A$, what if we just get the conditional density of $X$ given the random variable $\I_{A}$ (which is the indicator function  taking value 1 if $A$ and 0 otherwise)? In this case, all versions of the conditional density $p_{X|\I_{A}}$ will agree on the distribution $p_{X|\I_{A}}(\cdot,0)$, but the distribution $p_{X|\I_{A}}(\cdot,1)$ will not be uniquely defined. So, this procedure will not allow us to give any meaning to the notion of $p_{X|A}$.

\subsection{Towards conditioning with more general random elements} \label{subsection:towards}
So far, we have only discussed the case where $X$ and $Y$ are both discrete with joint probability mass function $p_{X,Y}(x,y)$ and the case where $X$ and $Y$ are both real with joint probability density $p_{X,Y}(x,y)$ on $\Re\times\Re$. In fact, these can be regarded as special cases of a more general setting. [For correctness, we will use some measure-theoretic terminology in this subsection, but it will be kept within brackets, and informal ideas can be understood by ignoring what is in the brackets.]

In the continuous case, $p_{X,Y}(x,y)$ is a probability density function on $\Re\times\Re$ [with respect to the Lebesgue measure]. In the discrete case, the probability mass function $p_{X,Y}(x,y)$ can also be viewed as a probability density but on a discrete space of possible values like $\N\times\N$ [with respect to the counting measure]. Random elements living on other spaces can also have well-defined probability densities. [See Proposition \ref{prop:rn}.] We use the more general term random element here, since random variables typically refer to the special case where random elements are real-valued.

Suppose $X$ takes values in a [standard Borel] space $E_X$ and $Y$ takes values in a [measurable] space $E_Y$. Suppose further that $(X,Y)$ has a joint probability density $p_{X,Y}(x,y)$ on $E_X\times E_Y$ [with respect to a suitable product measure]. Then, we can still define the conditional density of $X$ given $Y$ as a ratio of densities using the formula (\ref{eqn:conddistdiscrete}) as before, whenever $0<p_Y(y)<\infty$. Here, the denominator in the ratio of densities, $p_Y(y)$, is the marginal density of $Y$, and it can be evaluated as the integral over $x$ of the joint density. It follows that we get a new version of Bayes' theorem, which still has the old formulae (\ref{eqn:bayesthmdiscrete}, \ref{eqn:bayesrulediscrete}) but is more general. [For Bayes' theorem, we should additionally have that $E_Y$ is a standard Borel space
. Note that assuming a measurable space is a standard Borel space rules out some pathological spaces but retains those measurable spaces that we wish to work with in practice.]

In short, the key takeaway here is that for random elements living in more general spaces, conditional densities exist whenever joint densities exist.
The rest of this subsection will be devoted to illustrating why the generalisation we have described helps us to perform Bayesian inference in practice.

Typically in Bayesian inference, our parameters $\theta$ and data $Y$ are more than just simple univariate quantities: $\theta$ lives in some [standard Borel] space $E_\theta$ and $Y$ lives in some [standard Borel] space $E_Y$. We would like to use Bayes' theorem, and indeed Bayes' theorem applies as long as $(\theta,Y)$ has a joint density on $E_\theta\times E_Y$ [with respect to a suitable product measure]. For example, if we have two continuous real parameters and $n$ continuous real observations, then Bayes' theorem applies as long as $(\theta,Y)$ has a joint density on $\Re^2\times\Re^n$ [with respect to the Lebesgue measure]. If the observations are instead non-negative integers, then Bayes' theorem applies as long as $(\theta,Y)$ has a density on $\Re^2\times\N^n$ [with respect to the product of Lebesgue and counting measures].

We do need to ensure that $(\theta,Y)$ has a joint probability density $p_{\theta,Y}(\theta,y)$ on $E_\theta\times E_Y$ [with respect to a suitable product measure], but this condition is typically easy to check. Suppose $p_{\theta,Y}(\theta,y)$ is defined by
\begin{equation}
    p_{\theta,Y}(\theta,y)=\pi(\theta)f(y|\theta),\label{eqn:jointdensity}
\end{equation}
where $\pi(\theta)$ is a continuous or piecewise continuous [or merely measurable] density on $E_\theta$ [with respect to a base measure $\mu$] and $f(y|\theta)$ is continuous or piecewise continuous [or measurable] in $(y,\theta)$ and a density on $E_Y$ [with respect to a base measure $\nu$] for each $\theta$; in such a case, the condition is satisfied [and $p_{\theta,Y}$ is a density with respect to product measure $\mu\otimes\nu$]. Then, the conditional densities exist, and Bayes' theorem can be used.

We can also allow for $\theta$ and/or $Y$ to be composed of both discrete and continuous entries; in this case, we need to build the likelihood and prior in an appropriate way (using probability mass functions for the discrete entries and probability density functions for the continuous entries). It is also possible to have that $\theta$ and $Y$ are trans-dimensional, for example by taking the (disjoint) union of two spaces $E_\theta=\Re^3\sqcup\Re^4$ for a parameter vector that is either three- or four- dimensional. We may also have cases where an individual parameter or data point has a distribution with both discrete and continuous components, which may arise when using a spike-and-slab prior that is a mixture of a point mass at zero and a continuous distribution centred around zero. [To handle this latter case, we would have to be careful to choose an appropriate dominating measure.]

In conclusion, we can safely work with conditional densities and Bayes' theorem in a wide range of cases. With the appropriate interpretation of the symbols involved, Bayes' theorem retains the same formula (\ref{eqn:bayesthmdiscrete}, \ref{eqn:bayesrulediscrete}).

What if joint densities do not exist? When joint densities do not exist, conditional densities may also not exist. Without conditional densities, our versions of Bayes' theorem are no longer available as a tool to help us compute the posterior numerically. Regardless, it turns out that conditional probabilities still do exist, and this leads to a definition for conditional distributions. Crucially, this ensures that objects like the posterior distribution still do exist even when joint densities do not exist (for example, when non-parametric processes are involved).

\subsection{Measure theory} \label{subsection:measuretheory}
Up to now, our presentation has been necessarily informal. To redress this, we shall now sketch a roadmap of the relevant theory, but the reader is invited to skip ahead to any of the later sections. We have put the main ideas in a sequence of boxes, with some signposting outside of these boxes. This theory was developed in the first half of the 20th century, and it refutes MC's claim that ``the notion of conditional densities is inadmissible".

Following \citet[ch.\ 6]{Kallenberg2010ed2}, we shall start with conditional expectation and conditional probability, defining what it means to condition on a $\sigma$-algebra and then what it means to condition on a random element. Then, we shall define conditional distributions. Finally, we shall frame conditional densities as one way of representing conditional distributions, and we shall show that this framing is consistent with the familiar formula (\ref{eqn:conddistdiscrete}).

Towards a general theory of conditioning, the first step is to define what it means to condition on a $\sigma$-algebra. The box below defines conditional expectations given a $\sigma$-algebra. There can be many versions of the conditional expectation of $X$ given $\mathcal{G}$, but these versions can only differ in trivial ways, in a way that is made precise by Proposition \ref{prop:condexp}.
\mybox{
    Let $(\Omega,\mathcal{F},\P)$ be a probability space throughout this subsection. Let $\mathcal{G}\subset\mathcal{F}$ be a sub-$\sigma$-algebra, and let $X$ be an integrable random variable. Then there exists a random variable $\E[X|\mathcal{G}]$, known as (a version of) the \textit{conditional expectation of $X$ given $\mathcal{G}$}, with two properties:
    \begin{enumerate}[label=(\roman*)]
        \item $\E[X|\mathcal{G}]$ is $\mathcal{G}$-measurable and integrable;
        \item For any $A\in\mathcal{G}$, we have
        $\E[\E[X|\mathcal{G}]\I_A] = \E[X\I_A]$.
    \end{enumerate}
    \begin{proposition}[see \citealt{Kallenberg2010ed2}, Theorem 6.1] \label{prop:condexp}
        Such a random variable always exists, and if another random variable $Z$ satisfies the definition, then $Z=\E[X|\mathcal{G}]$ almost surely.
    \end{proposition}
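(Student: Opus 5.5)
To prove Proposition \ref{prop:condexp}, I would use the Radon--Nikodym theorem on the restricted space $(\Omega,\mathcal{G})$. Existence comes first. Split $X=X^+-X^-$ into its positive and negative parts. For a nonnegative integrable $X$, define a finite measure on $\mathcal{G}$ by $\nu(A)=\E[X\I_A]$ for $A\in\mathcal{G}$. Countable additivity follows from monotone convergence, and finiteness follows from integrability of $X$.

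Next, I would check that $\nu\ll\P|_{\mathcal{G}}$: if $A\in\mathcal{G}$ has $\P(A)=0$, then $X\I_A=0$ almost surely, so $\nu(A)=0$. The Radon--Nikodym theorem, applied on $(\Omega,\mathcal{G},\P|_{\mathcal{G}})$, then gives a $\mathcal{G}$-measurable nonnegative density $Y$ with $\E[Y\I_A]=\nu(A)=\E[X\I_A]$ for every $A\in\mathcal{G}$. Taking $A=\Omega$ shows $\E[Y]=\E[X]<\infty$, so $Y$ is integrable, and properties (i) and (ii) hold. For general integrable $X$, I would set $\E[X|\mathcal{G}]=Y^+-Y^-$, where $Y^\pm$ are the densities obtained for $X^\pm$. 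Both properties carry over by linearity.

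For uniqueness, suppose $Z$ and $Y$ both satisfy (i) and (ii). Then $W=Z-Y$ is $\mathcal{G}$-measurable and integrable, and $\E[W\I_A]=0$ for all $A\in\mathcal{G}$. I would take $A=\{W>0\}$, which lies in $\mathcal{G}$ by the measurability of $W$. This gives $\E[W\I_{\{W>0\}}]=0$ with a nonnegative integrand, so $W\le 0$ almost surely. The symmetric choice $A=\{W<0\}$ gives $W\ge 0$ almost surely, and hence $Z=Y$ almost surely.

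The only substantive ingredient is the Radon--Nikodym theorem. The step needing the most care is applying it on $\mathcal{G}$ rather than on $\mathcal{F}$. This is what makes the density $\mathcal{G}$-measurable, whereas $X$ itself is generally only $\mathcal{F}$-measurable. An alternative route avoids Radon--Nikodym: define $\E[\cdot|\mathcal{G}]$ on $L^2$ as the orthogonal projection onto the closed subspace $L^2(\mathcal{G})$, then extend to $L^1$ by monotone approximation. In that approach the main obstacle would instead be the extension step.
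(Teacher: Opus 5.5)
Your proof is correct and complete. Applying the Radon--Nikodym theorem on $(\Omega,\mathcal{G},\P|_{\mathcal{G}})$ gives existence, and testing against $A=\{W>0\}$ and $A=\{W<0\}$ gives a.s.\ uniqueness. The paper gives no proof of its own here; it defers to \citet[Theorem 6.1]{Kallenberg2010ed2}. Kallenberg's argument is essentially the alternative you mention at the end: for $X\in L^2$ he defines $\E[X|\mathcal{G}]$ as the orthogonal projection onto $L^2(\mathcal{G})$, gets property (ii) by taking inner products with $\I_A$, derives positivity and the bound $\E|\E[X|\mathcal{G}]|\le\E|X|$, and then passes to $L^1$. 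Uniqueness there rests on the same test-set argument you use.

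What each route buys:
\begin{itemize}
\item Your route is shorter and uses only a tool the paper already states (Proposition~\ref{prop:rn}). However, it treats Radon--Nikodym as a black box, and its proof (von Neumann's version is itself a Hilbert-space argument in $L^2$) is at least as long as the whole projection construction.
\item The projection route needs only completeness of $L^2$. It produces linearity, positivity and the $L^1$-contraction along the way, which is why Kallenberg can state those properties in the same theorem. In your approach positivity comes free from the nonnegative density, but linearity must be deduced afterwards from uniqueness.
\end{itemize}

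Two cosmetic points. Naming the densities obtained for $X^{\pm}$ as $Y^{\pm}$ clashes with the usual notation for the positive and negative parts of $Y$. You also use the identity $\int_A Y\der(\P|_{\mathcal{G}})=\E[Y\I_A]$ implicitly; it holds because $Y$ is $\mathcal{G}$-measurable, which is worth saying.
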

}
From this, we can define conditional probabilities given a $\sigma$-algebra, and this can be viewed as a special case of conditional expectation given a $\sigma$-algebra. 
\mybox{
    Let $A\in\mathcal{F}$ be an event. We define (a version of) the \textit{conditional probability of $A$ given $\mathcal{G}$} as the random variable
    $$\P(A|\mathcal{G})=\E[\I_A|\mathcal{G}].$$
}
Now we may define what it means to condition on a random element. The conditional expectation given a random element may be viewed as a special case of the conditional expectation given a $\sigma$-algebra. Likewise, the conditional probability given a random element may be viewed as a special case of the conditional probability given a $\sigma$-algebra.
\mybox{
    Let $Y$ be a random element on some measurable space $(E_Y,\mathcal{E}_Y)$, and let $\sigma(Y)$ denote the $\sigma$-algebra generated by $Y$. We define (a version of) the \textit{conditional expectation of $X$ given $Y$} by
    $$\E[X|Y]=\E[X|\sigma(Y)],$$
    and (a version of) the \textit{conditional probability of $A$ given $Y$} by
    $$\P(A|Y)=\P(A|\sigma(Y)).$$
}
Finally, the conditional distribution of a random element given another random element may be defined. Note that the probability distribution of $X$ is just a probability measure but over the appropriate state space (which we have called $E_X$) rather than over the sample space ($\Omega$); functionally, the probability distribution takes as input a measurable subset of the state space and returns as output a corresponding probability mass in $[0,1]$. Then, the conditional distribution of $X$ given $Y$ will be a collection of probability distributions over the state space $E_X$, one for each possible value $y$ of $Y$. There can be many versions of the conditional distribution of $X$ given $Y$, but any two versions can only differ on a trivial set of values $y$, as we see in Proposition \ref{prop:conddist}.

\mybox{
    Let $X$ be a random element on some standard Borel space 
    $(E_X,\mathcal{E}_X)$, and let $Y$ be a random element on some measurable space $(E_Y,\mathcal{E}_Y)$. Then there exists a map $\kappa(y,B)$ with signature $\kappa:E_Y\times \mathcal{E}_X\to[0,1]$, known as (a version of) the \textit{conditional distribution of $X$ given $Y$}, with three properties:
    \begin{enumerate}[label=(\roman*)]
        \item $\kappa(y,\cdot)$ is a probability measure on $E_X$ for each fixed $y\in E_Y$; 
        \item $\kappa(\cdot,B)$ is $\mathcal{E}_Y$-measurable for each fixed $B\in\mathcal{E}_X$;
        \item $\kappa(Y,\cdot)=\P(X\in\cdot\ |Y)$ almost surely.
    \end{enumerate}
    \begin{proposition}[see \citealt{Kallenberg2010ed2}, Theorem 6.3]
        \label{prop:conddist}
        Such a map always exists, and if another map $\tilde{\kappa}$ satisfies the definition, then $\kappa(Y,\cdot)=\tilde\kappa(Y,\cdot)$ almost surely.
    \end{proposition}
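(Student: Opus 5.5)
The plan is to reduce to the case $E_X=\Re$ with its Borel $\sigma$-algebra, prove existence there by building the kernel from regular versions of conditional distribution functions, and then transfer back using the standard Borel structure. Uniqueness is handled separately by a $\pi$--$\lambda$ argument. Since $(E_X,\mathcal{E}_X)$ is standard Borel, there is a Borel isomorphism $\phi$ from $E_X$ onto a Borel subset $S\subset\Re$. It therefore suffices to construct a kernel for the real random variable $X'=\phi(X)$ and push it forward through $\phi^{-1}$. The one care needed in this transfer is that the kernel must put full mass on $S$; this is arranged at the end.

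For existence on $\Re$, fix rationals $r\in\mathbb{Q}$. By Proposition \ref{prop:condexp}, choose versions $F_r=\P(X'\le r\,|\,Y)$. Each is $\sigma(Y)$-measurable, so by the Doob--Dynkin factorisation we can write $F_r=g_r(Y)$ with $g_r$ measurable on $E_Y$. Three properties then hold for $\P\circ Y^{-1}$-almost every $y$, each by monotonicity and conditional monotone/dominated convergence applied to countably many statements:
\begin{itemize}
    \item $r\mapsto g_r(y)$ is nondecreasing;
    \item $g_{r}(y)=\lim_{q\downarrow r,\,q\in\mathbb{Q}} g_q(y)$;
    \item $g_r(y)\to0$ as $r\to-\infty$ and $g_r(y)\to1$ as $r\to\infty$.
\end{itemize}
Let $N\in\mathcal{E}_Y$ be the null set where any of these fails. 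For $y\notin N$, set $G(y,x)=\inf_{r>x}g_r(y)$, which is a genuine distribution function. For $y\in N$, use a fixed distribution function. Then define $\kappa'(y,\cdot)$ as the Lebesgue--Stieltjes measure of $G(y,\cdot)$, which gives property (i). Property (ii) holds for half-lines $(-\infty,x]$ by construction, as a countable infimum of measurable functions. The class of $B$ satisfying (ii) and (iii) is a $\lambda$-system, and the half-lines form a $\pi$-system generating the Borel sets, so Dynkin's theorem extends both properties to all Borel $B$. For (iii), we need $\kappa'(Y,(-\infty,x])=\P(X'\le x|Y)$ almost surely. This follows for rational $x$ directly, and for real $x$ by right-continuity together with conditional monotone convergence.

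To return to $E_X$, note that $\kappa'(Y,S)=\P(X'\in S|Y)=1$ almost surely. Hence $\{y:\kappa'(y,S)<1\}$ is a measurable null set. On this set, redefine $\kappa'(y,\cdot)$ to be a fixed probability measure on $S$. Then $\kappa(y,B)=\kappa'(y,\phi(B))$ satisfies (i)--(iii), because $\phi(B)$ is Borel.

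For uniqueness, suppose $\kappa$ and $\tilde\kappa$ both satisfy the definition. By (iii) and Proposition \ref{prop:condexp}, for each $B$ we have $\kappa(Y,B)=\tilde\kappa(Y,B)$ almost surely. Since $\mathcal{E}_X$ is countably generated (being standard Borel), take a countable $\pi$-system $\mathcal{C}$ generating it. Off a single null set, the two measures $\kappa(Y,\cdot)$ and $\tilde\kappa(Y,\cdot)$ then agree on $\mathcal{C}$, and hence on all of $\mathcal{E}_X$ by the $\pi$--$\lambda$ uniqueness lemma.

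The main obstacle is the existence step: passing from uncountably many almost-sure statements to a single exceptional set. This is exactly why the construction restricts to rationals and uses the regularisation $G(y,x)=\inf_{r>x}g_r(y)$, and the full-measure checks on that step need the most care.
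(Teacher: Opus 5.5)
Your proof is correct and is essentially the standard argument behind the result the paper cites without proof (Kallenberg, Theorem 6.3): reduce to a Borel subset $S\subset\Re$, build the kernel from rational-indexed conditional distribution functions regularised off a single measurable null set, extend by a $\pi$--$\lambda$ argument, redefine on a null set so the kernel puts full mass on $S$, and get uniqueness from a countable generating $\pi$-system. The right-continuity condition in your exceptional set is redundant, since $G(y,x)=\inf_{r>x}g_r(y)$ together with conditional monotone convergence already gives (iii) for every $x$, but including it does no harm.
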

}

We have now stated the core objects in the theory of conditional probability. Further theory proves that these objects behave as we expect them to, that we can also perform iterated conditioning \citep[see][ch.\ 6]{Kallenberg2010ed2}, and that we can also consider $\sigma$-finite measures rather than just probability measures \citep[see][ch.\ 3]{Kallenberg2021ed3}. This last point is relevant to Bayesian inference as we sometimes work with improper priors \citep{chang1997conditioning}.

Now (departing from \citealt{Kallenberg2010ed2}), we turn to the special case where distributions have densities. In general, a probability density is the Radon--Nikodym derivative of a probability distribution, with respect to a some dominating measure (which may be referred to as a base measure or reference measure). Often the reference measure is implicitly the Lebesgue measure, but in general it need not be. Knowing the probability density (together with the reference measure) fully determines what the probability distribution is. The Radon--Nikodym theorem ensures that the probability density is well-defined:
\mybox{
    \begin{proposition}[Radon--Nikodym theorem, see \citealt{Kallenberg2010ed2}, Theorem 2.10] \label{prop:rn}
        Let $\mu,\nu$ be $\sigma$-finite measures on a measurable space $(S,\mathcal{S})$. If $\nu$ is absolutely continuous with respect to $\mu$ (denoted $\nu\ll\mu$), then there exists a $\mu$--almost everywhere unique measurable function $f:S\to[0,\infty)$ such that $\nu(A)=\int_A f\der\mu$ for all $A\in\mathcal{S}$.
    \end{proposition}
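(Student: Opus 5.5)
The plan is the standard Hilbert-space argument (von Neumann's proof). First I would reduce to finite measures, then handle the finite case, and prove uniqueness at the end.

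\emph{Reduction to finite measures.} Since $\mu$ and $\nu$ are $\sigma$-finite, I partition $S$ into countably many disjoint sets $S_1,S_2,\dots\in\mathcal{S}$ on which both $\mu$ and $\nu$ are finite; this is possible by intersecting the two exhausting sequences. If a density $f_n$ exists on each $S_n$, then $f=\sum_n f_n\I_{S_n}$ is measurable. Countable additivity gives $\nu(A)=\sum_n\nu(A\cap S_n)=\int_A f\der\mu$.

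\emph{Finite case.} Let $\lambda=\mu+\nu$. The functional $g\mapsto\int g\der\nu$ is bounded on $L^2(\lambda)$, since Cauchy--Schwarz gives $\left|\int g\der\nu\right|\le\nu(S)^{1/2}\|g\|_{L^2(\lambda)}$. By the Riesz representation theorem there is $h\in L^2(\lambda)$ with $\int g\der\nu=\int gh\der\lambda$ for all $g\in L^2(\lambda)$. I then establish the following in order.

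\begin{enumerate}
\item Testing with $g=\I_A$ shows $0\le h\le1$ $\lambda$-a.e., so I may modify $h$ to take values in $[0,1]$ everywhere.
\item Rewriting the identity as $\int g(1-h)\der\nu=\int gh\der\mu$, I take $g=\I_{\{h=1\}}$. This gives $\mu(\{h=1\})=0$, and hence $\nu(\{h=1\})=0$ by absolute continuity $\nu\ll\mu$.
\item For measurable $A$, I take $g=\I_A(1+h+\dots+h^{n})$. Then $\int_A(1-h^{n+1})\der\nu=\int_A h(1+\dots+h^n)\der\mu$.
\item I let $n\to\infty$. On the left, $1-h^{n+1}\uparrow\I_{\{h<1\}}$, which is $1$ $\nu$-a.e. by the previous step. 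On the right, the integrand increases to $f:=h/(1-h)$ on $\{h<1\}$; I set $f:=0$ on $\{h=1\}$, which is a $\mu$-null set.
\item Monotone convergence on both sides gives $\nu(A)=\int_A f\der\mu$. Taking $A=S$ shows $f$ is $\mu$-integrable, so $f<\infty$ $\mu$-a.e., and I may redefine $f$ on a null set so that it maps into $[0,\infty)$.
\end{enumerate}

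\emph{Uniqueness.} Suppose $f$ and $f'$ both work. On each $S_n$ where $\mu$ is finite, set $A=S_n\cap\{f>f'\}$. Then $\int_A(f-f')\der\mu=0$ with a positive integrand on $A$, so $\mu(A)=0$; the same holds symmetrically for $\{f'>f\}$. Summing over $n$ gives $f=f'$ $\mu$-a.e. The subtraction is legitimate because $\int_{S_n}f\der\mu=\nu(S_n)<\infty$, so both integrals are finite.

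The main obstacle is the finite case, specifically converting the Riesz representative $h$ into a $\mu$-density. The key points are showing $\nu(\{h=1\})=0$, which is exactly where $\nu\ll\mu$ enters, and justifying the monotone limit. The reduction and uniqueness steps are routine bookkeeping by comparison.
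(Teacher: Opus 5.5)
Your proof is correct. It is von Neumann's Hilbert-space argument, and each step goes through: the reduction to pieces $S_n$ on which both measures are finite, the bounds $0\le h\le 1$, the identity $\int_A(1-h^{n+1})\,\der\nu=\int_A(h+\dots+h^{n+1})\,\der\mu$, and the monotone limit. The hypothesis $\nu\ll\mu$ is used only in step 2, which is exactly where it is needed.

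The paper itself gives no proof of this statement; it imports the theorem from Kallenberg. The cited argument is purely measure-theoretic. After the same reduction to finite measures, one takes $f$ maximising $\int f\,\der\mu$ among measurable $f\ge0$ with $\int_A f\,\der\mu\le\nu(A)$ for all $A$; this class is closed under countable suprema, so the maximum is attained. Writing $\nu_s(A)=\nu(A)-\int_A f\,\der\mu$, one then applies the Hahn decomposition to the signed measures $\nu_s-\mu/n$ to show that $\nu_s$ is singular with respect to $\mu$. Since $\nu_s\le\nu\ll\mu$, this forces $\nu_s=0$.

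Your route replaces the Hahn decomposition by the Riesz representation theorem on $L^2(\mu+\nu)$, so it relies on completeness of $L^2$ rather than on the theory of signed measures. In return it is shorter and gives the Lebesgue decomposition at no extra cost. The first half of step 2, $\mu(\{h=1\})=0$, does not use $\nu\ll\mu$. So without that hypothesis your computation gives $\nu(A)=\int_A f\,\der\mu+\nu(A\cap\{h=1\})$, and the restriction of $\nu$ to $\{h=1\}$ is exactly the singular part.

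One small redundancy: the modification in step 5 is unnecessary. Since $h/(1-h)$ is finite on $\{h<1\}$ and you set $f=0$ on $\{h=1\}$, your $f$ already maps into $[0,\infty)$ everywhere.
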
 
    
    In general, $f$ is known as the Radon--Nikodym derivative of $\nu$ with respect to $\mu$; but if $\nu(S)=1$, then $\nu$ is a probability distribution, and $f$ is said to be the \textit{probability density of $\nu$ with respect to $\mu$}.
}

With this in mind, there is natural way to define the conditional density of $X$ given $Y$. Since the conditional distribution of $X$ given $Y$ can be viewed as a collection of distributions $\{\kappa(y,\cdot)\}_y$, one for each value $y$ of $Y$, we may view the conditional density of $X$ given $Y$ (with respect to a common reference measure) as a corresponding collection of densities $\{p_{X|Y}(\cdot|y)\}_y$. Such a conditional density exists exactly when there exists a reference measure that dominates all the distributions $\{\kappa(y,\cdot)\}_y$. In particular, this condition can be satisfied when the joint density of $X$ and $Y$ exists with respect to some reference measure that is a product measure on $E_X\times E_Y$, and in this case the formula (\ref{eqn:conddistdiscrete}) that we have been using all along holds. This formalises our key message from Section \ref{subsection:towards} that if joint densities exist, then conditional densities also exist. We make all the ideas in this paragraph precise in the box below; at last we see that the notion of conditional densities is supported by rigorous theory.

\mybox{
    A function $p_{X|Y}:E_X\times E_Y\to[0,\infty)$ is (a version of) the \textit{conditional density of $X$ given $Y$} with respect to a $\sigma$-finite reference measure $\mu$ if it is $(\mathcal{E}_X\otimes\mathcal{E}_Y)$-measurable and $\kappa(y,B) = \int_B p_{X|Y}(x|y)\der \mu(x)$ defines a version of the conditional distribution of $X$ given $Y$.
    
    \begin{proposition} \label{prop:densities}
        Suppose there exist $\sigma$-finite base measures $\mu$ on $E_X$ and $\nu$ on $E_Y$ such that the random element $(X,Y)$ has a density $p_{X,Y}(x,y)$ with respect to the product base measure $\mu\otimes\nu$. Let $p_{X}(x)$ denote (a version of) the marginal density of $X$ with respect to $\mu$. Then the following is a version of the conditional density of $X$ given $Y$:
    \begin{equation}
        p_{X|Y}(x|y)=
        \begin{cases}
            \frac{p_{X,Y}(x,y)}{\int_{E_X} p_{X,Y}(x',y)\der\mu(x')}, &\text{when the denominator is non-zero and finite},\\
            p_{X}(x), &\text{otherwise}.
        \end{cases}
        \label{eqn:formaldefn}
    \end{equation}
    \end{proposition}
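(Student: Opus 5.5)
We must check three things: that $p_{X|Y}$ is $(\mathcal{E}_X\otimes\mathcal{E}_Y)$-measurable; that for each fixed $y$ the map $B\mapsto\int_B p_{X|Y}(x|y)\der\mu(x)$ is a probability measure on $E_X$; and that $\kappa(Y,B)=\P(X\in B\mid Y)$ almost surely for each $B\in\mathcal{E}_X$. The last condition means that for every $C\in\mathcal{E}_Y$ we have $\E[\kappa(Y,B)\I_C(Y)]=\P(X\in B,\,Y\in C)$. Indeed, $\kappa(Y,B)$ is $\sigma(Y)$-measurable once $\kappa(\cdot,B)$ is $\mathcal{E}_Y$-measurable, and every set in $\sigma(Y)$ has the form $\{Y\in C\}$.

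Write $q(y)=\int_{E_X}p_{X,Y}(x',y)\der\mu(x')$.

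\textbf{Measurability.} By Tonelli's theorem for the $\sigma$-finite measure $\mu$ and the nonnegative jointly measurable $p_{X,Y}$, the function $q$ is $\mathcal{E}_Y$-measurable with values in $[0,\infty]$. Hence the set $G=\{y:0<q(y)<\infty\}$ lies in $\mathcal{E}_Y$. On $E_X\times G$ the function $p_{X,Y}(x,y)/q(y)$ is a ratio of measurable functions, so it is measurable. On $E_X\times G^c$ the function equals $p_X(x)$, which is measurable in $(x,y)$ because it is a composition with the projection. Gluing the two pieces along the measurable partition gives joint measurability, and this in turn gives property (ii) of a conditional distribution via Tonelli.

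\textbf{Probability measure for each $y$.} Fix $y$. If $y\in G$, then $\int_{E_X} p_{X,Y}(x,y)/q(y)\der\mu(x)=1$ and the integrand is nonnegative, so countable additivity follows from monotone convergence. If $y\notin G$, the measure is the marginal law of $X$, since $p_X$ is a density of $X$ with respect to $\mu$. This is exactly why the ``otherwise'' branch is there.

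\textbf{Defining property.} Fix $B\in\mathcal{E}_X$ and $C\in\mathcal{E}_Y$. First note that $q$ is a version of the marginal density $p_Y$ with respect to $\nu$, because $\P(Y\in C)=\int_C q\der\nu$ by Tonelli. Consequently $\P(Y\in G^c)=\int_{G^c}q\der\nu$. On $G^c$ we have either $q=0$, in which case the integrand vanishes, or $q=\infty$. The set $\{q=\infty\}$ must be $\nu$-null, since otherwise the total mass $\int q\der\nu=1$ would fail. Hence $\P(Y\in G^c)=0$. Now compute
\begin{align*}
\E[\kappa(Y,B)\I_C(Y)]
&=\int_{C\cap G}\left(\int_B\frac{p_{X,Y}(x,y)}{q(y)}\der\mu(x)\right)q(y)\der\nu(y)\\
&=\int_{C\cap G}\int_B p_{X,Y}(x,y)\der\mu(x)\der\nu(y).
\end{align*}
Restricting to $C\cap G$ is harmless because $Y\in G$ almost surely, and the cancellation is valid since $0<q<\infty$ on $G$. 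On $G^c$ we have $\int_B p_{X,Y}(x,y)\der\mu(x)\le q(y)$, and $q(y)=0$ there except on a $\nu$-null set, so this inner integral vanishes $\nu$-a.e. on $G^c$. We may therefore extend the region back to $C$. Fubini--Tonelli then identifies the result with $\int_{B\times C}p_{X,Y}\der(\mu\otimes\nu)=\P(X\in B,Y\in C)$.

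\textbf{Main obstacle.} The delicate step is this null-set bookkeeping on $G^c$. It requires showing that $\P(Y\in G^c)=0$ and that the contribution of $G^c$ to the double integral vanishes. Both follow from the observation that $q$ is a $\nu$-density of $Y$. Everything else is routine use of Tonelli's theorem, which is available because $\mu$ and $\nu$ are $\sigma$-finite.
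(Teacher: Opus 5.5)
Your proof is correct and follows the same route as the paper's sketch. You use Tonelli for joint measurability and properties (i)--(ii), then verify the defining identity $\E[\kappa(Y,B)\I_{\{Y\in C\}}]=\P(X\in B,\,Y\in C)$ for all $C\in\mathcal{E}_Y$ and invoke the a.s.\ uniqueness of conditional expectation. Your observation that $q$ is a $\nu$-density of $Y$, so that $\{q=\infty\}$ is $\nu$-null and $\P(Y\in G^c)=0$, is exactly the detail the paper compresses into ``the denominator is non-zero and finite $(\P\circ Y^{-1})$-a.e.''
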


    We shall give a sketch proof. Note that whenever we handle joint densities or conditional densities, it is helpful to bear in mind Tonelli's theorem, which ensures that the order of integration does not matter.
    \begin{proposition}[Tonelli's theorem, see \citealt{Billingsley1986}, Theorem 18.3]
        Let $(S,\mathcal{S},\mu)$ and $(T,\mathcal{T},\nu)$ be $\sigma$-finite measure spaces and $f:S\times T\to[0,\infty)$ be a ($\mathcal{S}\otimes\mathcal{T}$)-measurable function. Then $x\mapsto\int f(x,y)\der\nu(y)$ and $y\mapsto\int f(x,y)\der\mu(x)$ are $\mathcal{S}$- and $\mathcal{T}$-measurable, respectively, and
        $$\int f(x,y)\der(\mu\otimes\nu)(x,y)=\iint f(x,y)\der\mu(x)\der\nu(y)=\iint f(x,y)\der\nu(y)\der\mu(x).$$
    \end{proposition}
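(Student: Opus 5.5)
We prove Tonelli's theorem by the standard bootstrapping argument: first for indicators of sets in $\mathcal{S}\otimes\mathcal{T}$, then for simple functions, and finally for general non-negative measurable functions by monotone convergence. Throughout, the product measure $\mu\otimes\nu$ is taken to be the unique measure on $\mathcal{S}\otimes\mathcal{T}$ with $(\mu\otimes\nu)(A\times B)=\mu(A)\nu(B)$. We will in fact construct this measure along the way, as the common value of the two iterated integrals of indicators.

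\textbf{Step 1 (sections and indicators).} For $E\in\mathcal{S}\otimes\mathcal{T}$, we first check that every section $E_x=\{y:(x,y)\in E\}$ lies in $\mathcal{T}$, and every section $E^y$ lies in $\mathcal{S}$. The collection of sets all of whose sections are measurable is a $\sigma$-algebra containing the rectangles, so it contains $\mathcal{S}\otimes\mathcal{T}$. The real work is to show that $x\mapsto\nu(E_x)$ is $\mathcal{S}$-measurable, that $y\mapsto\mu(E^y)$ is $\mathcal{T}$-measurable, and that
$$\int\nu(E_x)\der\mu(x)=\int\mu(E^y)\der\nu(y).$$

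\textbf{Step 2 (finite case via Dynkin).} Assume first that $\mu$ and $\nu$ are finite. Let $\mathcal{L}$ be the class of $E$ for which both functions are measurable and the two integrals agree. For a rectangle $A\times B$, both integrals equal $\mu(A)\nu(B)$, so $\mathcal{L}$ contains the $\pi$-system of rectangles. The class $\mathcal{L}$ is closed under proper differences $F\setminus E$ with $E\subset F$: here we subtract $\nu(E_x)$ from $\nu(F_x)$, which is legitimate because everything is finite. It is closed under increasing unions by monotone convergence. Hence $\mathcal{L}$ is a $\lambda$-system, and Dynkin's $\pi$--$\lambda$ theorem gives $\mathcal{L}\supseteq\mathcal{S}\otimes\mathcal{T}$.

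\textbf{Step 3 ($\sigma$-finite case).} For $\sigma$-finite $\mu,\nu$, choose increasing sequences $S_n\uparrow S$ and $T_n\uparrow T$ with finite measures. Apply Step 2 to the restrictions of $\mu$ and $\nu$ to $S_n$ and $T_n$, that is, to the sets $E\cap(S_n\times T_n)$, and let $n\to\infty$. Measurability of the limit functions and equality of the integrals then follow from monotone convergence. The common value defines a countably additive set function on $\mathcal{S}\otimes\mathcal{T}$ (additivity again by monotone convergence), which agrees with $\mu(A)\nu(B)$ on rectangles. By uniqueness of measures on a $\pi$-system with a $\sigma$-finite exhaustion, it is $\mu\otimes\nu$. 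This establishes all three equalities in the statement for indicators $f=\I_E$.

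\textbf{Step 4 (extension to general $f$).} By linearity of the integral, the statement holds for non-negative simple functions. For general measurable $f\ge0$, take simple functions $0\le f_n\uparrow f$. For each $x$, monotone convergence gives $\int f_n(x,y)\der\nu(y)\uparrow\int f(x,y)\der\nu(y)$. The left-hand sides are $\mathcal{S}$-measurable, so their pointwise limit is too, and the same holds with the roles of $x$ and $y$ swapped. Applying monotone convergence once more, to the outer integrals and to the integral against $\mu\otimes\nu$, passes the equalities to the limit.

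\textbf{Main obstacle.} The crux is Step 2: establishing measurability of $x\mapsto\nu(E_x)$ for arbitrary $E$ in the product $\sigma$-algebra, not just for rectangles. The $\pi$--$\lambda$ route needs finiteness in order to handle differences. This is exactly where $\sigma$-finiteness enters the theorem, and the theorem can fail without it; the standard counterexample is Lebesgue measure paired with counting measure on the diagonal of $[0,1]^2$.
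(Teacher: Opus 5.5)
Your proof is correct. It is essentially the standard argument behind the result the paper cites; the paper gives no proof of its own and defers to Billingsley's Theorems 18.1--18.3, which run through measurable sections, a $\pi$--$\lambda$ argument for finite measures, the $\sigma$-finite extension with $\mu\otimes\nu$ identified via uniqueness on rectangles, and then simple functions and monotone convergence. The only cosmetic difference is that Billingsley shows the two iterated integrals of $\I_E$ each define a measure and gets their equality from the uniqueness theorem, whereas you carry the equality through the $\lambda$-system directly.
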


    \begin{proof}[Proof of Proposition \ref{prop:densities}]
        First, note that since the joint density $p_{X,Y}(x,y)$ is by definition non-negative and $(\mathcal{E}_X\otimes\mathcal{E}_Y)$-measurable, Tonelli's theorem applies to it, and we can check that $p_{X|Y}(x|y)$ as defined in (\ref{eqn:formaldefn}) is $(\mathcal{E}_X\otimes\mathcal{E}_Y)$-measurable. It remains to show that
        $$\kappa(y,B)=\int_B p_{X|Y}(x|y)\der\mu(x)$$
        is a version of the conditional distribution of $X$ given $Y$. Of the three defining properties that must be checked, (i) and (ii) are straightforward (using Tonelli). It remains to show (iii). Fix $B\in\mathcal{E}_X$. We need to show that
        $\kappa(Y,B)=\P(\{X\in B\}|Y)$ a.s.
        By Proposition \ref{prop:condexp}, the random variable $\P(\{X\in B\}|Y)$ is the a.s.-unique $\sigma(Y)$-measurable random variable such that
        \begin{equation}
            \E\left[\P(\{X\in B\}|Y) \I_{\{Y\in C\}}\right]=\E\left[\I_{\{X\in B\}} \I_{\{Y\in C\}}\right] \quad \text{for all }C\in\mathcal{E}_Y.
            \label{eqn:factforpropertyiii}
        \end{equation}
        But by using the fact that we have densities, using Tonelli again and bearing in mind that the denominator $\int_{E_X}p_{X,Y}(x',y)\der\mu(x')$ is non-zero and finite $(\P\circ Y^{-1})$-a.e., it can be seen that $\kappa(Y,B)$ is also a random variable satisfying (\ref{eqn:factforpropertyiii}), so it must be that $\kappa(Y,B)=\P(\{X\in B\}|Y)$ a.s., as required.
    \end{proof}

    The proposition above concerns existence. Uniqueness (in the almost-sure sense) is immediate from the uniqueness of the conditional distribution: if $p_{X|Y}$ and $\tilde{p}_{X|Y}$ are different versions of the conditional density, then $\tilde{p}_{X|Y}(\cdot|Y)$ and $p_{X|Y}(\cdot|Y)$ almost surely describe the same distribution over $E_X$. Note that when $p_Y(y)=0$ or $p_Y(y)=\infty$ the choice of $p_{X|Y}(\cdot|y)$ is arbitrary as the events $\{p_Y(Y)=0\}$ and $\{p_Y(Y)=\infty\}$ have zero probability of occurring; this justifies why we did not specify it in (\ref{eqn:conddistdiscrete}).
}

\subsubsection{Bibliographic notes}
Regarding conditional densities, relevant texts include \citet[ch.\ 14]{kingmantaylor1966probability}, \citet[ch.\ 6]{hoelportstone1971probability}, \citet[ch.\ 6]{ash1972probability}. These all discuss the conditional density of one continuous random variable with respect to another continuous random variable. The first two of these also mention the version of Bayes' theorem concerning conditional densities, and the last two of these note that the conditional densities may be extended to continuous random vectors. However, we did not find a text that makes explicit the fact that conditional densities may be extended to random elements living in more general spaces. We have chosen to present the more general case with random elements since, in practice, we may need to use Bayes' theorem in this case.

In probability textbooks, the Borel paradox is presented as a warning that we should not try to condition on zero-probability events. Relevant discussions are given in \citet[ch.\ 5]{Pollard2001}, \cite{chang1997conditioning}, and \cite{bungert2022lionatticresolution}. These provide further explanation as to why the concerns expressed by MC are insubstantial. MC has claimed that the Borel--Kolmogorov paradox is avoided by the construction described in \cite{MOSEGAARD2002inverse}; however, what they describe is an informal idea that is never made rigorous, and the idea is geometric and not probabilistic.\footnote{
    Moreover, the text of \cite{MOSEGAARD2002inverse} contains some issues.\begin{itemize}
        \item It makes the unjustified claim that their construction gives rise to ``a definition invariant under any change of variables". Their construction requires a notion of orthogonality, which is not invariant to all changes of variables.
        \item It states the postulate that given a ``volume measure" on a space, any other measure is absolutely continuous with respect to the volume measure. This is false, as a measure could have all of its mass concentrated at a single point, say.
        \item It introduces a notion of ``Jeffreys quantities" that gives a rule of thumb for choosing priors. Their approach is difficult to justify, as appropriate choice of prior ought to depend on the model used (and any domain knowledge); moreover, priors chosen in this way differ fundamentally from Jeffreys priors \citep{jeffreys1946prior}, which have the key property that in any parametrisation of the parameter space the same prior is obtained. 
    \end{itemize}
}
\cite{bungert2022lionatticresolution} also approaches the Borel--Kolmogorov paradox through a geometric lens; their text is based on the rigorous theory of Hausdorff measures \citep[ch.\ 19]{Billingsley1986}. However, these geometric ideas are not applicable to statistical inference; in probability, events should have meaning without needing to position them inside a Euclidean space.

Turning to the measure-theoretic foundations, we have mostly followed \citet[ch.\ 6]{Kallenberg2010ed2}. While this text is more terse, \citet[chs.\ 33--34]{Billingsley1986} is more readable. While the existing theory of conditional probability, conditional expectation and conditional distribution has been long-established, there are still different ways to present the subject:
\begin{itemize}
    \item \citeauthor{Kallenberg2010ed2} defines conditional probabilities as a special case of conditional expectations, whereas \citeauthor{Billingsley1986} defines conditional probabilities first and subsequently constructs conditional expectations.
    \item When we condition on a random variable $Y$, we may either define the conditional probability as a function of $y\in E_Y$ that is $(\P\circ Y^{-1})$-a.e.\ unique, or as a $\sigma(Y)$-measurable random variable that is $\P$-a.e. unique (that is, a.s.\ unique). \citet{Pollard2001} takes the former view, whereas \citeauthor{Kallenberg2010ed2} and \citeauthor{Billingsley1986} take the latter.
    \item \citet{Kallenberg2021ed3} presents conditional distributions as part of a more general theory of disintegration compared to \citet{Kallenberg2010ed2}.
\end{itemize}

\section{The modeller needs to specify an appropriate model} \label{section:3}
In this section, we deal with the examples that MC provides in their Appendices A, F and G. A loosely unifying theme of this section is that the issues that MC reports can be addressed by setting up appropriate models. In the subsections that follow, we shall address those examples, explaining that they fail to show genuine inconsistencies in Bayesian inference and describing the modelling approaches that MC could have taken to perform Bayesian inference in a suitable way.

\subsection{MC's Appendix G demonstrates that different models can yield different results}
Obviously, different models can yield different results, if those models are not equivalent.

In their Appendix G, MC demonstrates that models that are different (via fixing a different value of $\sigma$) can produce different results. While this is true, we think it should not be surprising and it should not be thought of as an inconsistency. In this situation, if the modeller does not know what the appropriate value of $\sigma$ is for their model, they may treat $\sigma$ as a parameter and assign to it a prior distribution reflecting the range of plausible values.

\subsection{MC's ``simple tomographic example" (in their Appendix A) is invalid: conditioning on a zero-probability event is not allowed}
\label{subsection:tomographic}
In the simple tomographic example (Appendix A of MC), there are two unknown parameters $\bm{v}=(v_1,v_2)$. There is a prior density over $\bm{v}$; they condition on the data to obtain the posterior density; then they condition on the event $\{v_1=v_2\}$. They then find that the resulting conditional distribution is ill-defined.

What has happened here? Using Bayes' theorem to condition on the data is fine. There is no inconsistency that arises when we condition on the data, even under different parametrisations of the parameter space. However, their conditioning on the zero-probability event $\{v_1=v_2\}$ is invalid. We reiterate that one must not try to condition on a zero-probability event; there needs to be a random variable to condition on. It can happen neither in a Bayesian methodology, nor in any other methodology.

In practice, it may be that we observe data that provides evidence about whether the two velocities are similar; then we can follow the usual Bayesian framework, with data being modelled probabilistically. As discussed in Section \ref{section:2}, we are allowed to condition on data when the data are treated as random.

Intuitively, MC has written down a model that knows that $v_1$ and $v_2$ are different, so it makes no sense to impose the restriction that $v_1$ and $v_2$ are equal; they are asking the model to do something it was never designed to do.

This begs the question, what should we do, as modellers, if we want to perform Bayesian inference for the velocities under the model assumption that $v_1=v_2$? In this case, we could have started with just one velocity parameter and placed a prior over that one parameter. Then Bayes' theorem determines the correct posterior given the data. Even if we had used the slowness parametrisation, we would have obtained the correct posterior.

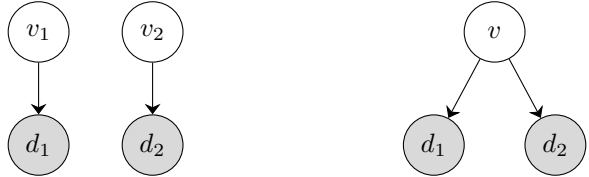
\begin{figure}[!ht]
    \centering
    \begin{tikzpicture}[
        every node/.style={circle, draw, minimum size=8mm},
        >={Stealth[length=4pt,width=6pt]}
    ]
    \node (v1) at (0,0) {$v_1$};
    \node[fill=gray!30] (d1a) at (0,-1.5) {$d_1$};
    \draw[->] (v1) -- (d1a);
    \node (v2) at (1.5,0) {$v_2$};
    \node[fill=gray!30] (d2a) at (1.5,-1.5) {$d_2$};
    \draw[->] (v2) -- (d2a);
    \node (v) at (6,0) {$v$};
    \node[fill=gray!30] (d1b) at (5.2,-1.5) {$d_1$};
    \node[fill=gray!30] (d2b) at (6.8,-1.5) {$d_2$};
    
    \draw[->] (v) -- (d1b);
    \draw[->] (v) -- (d2b);
    
    \end{tikzpicture}
    \caption{A model with two distinct parameters (left) is qualitatively different from a model with a shared parameter (right).}
    \label{fig:tomographic}
\end{figure}

A plausible scenario is that we are unsure whether or not $v_1=v_2$, but we have reason to believe that it may be the case. Then the obvious option is to fit two distinct models (as in Figure \ref{fig:tomographic}) to the data and use model comparison metrics to quantify to what extent the data support one model over the other. 
Another option that is available to us in the Bayesian framework is to fit a single model that incorporates both possibilities, a hierarchical model. This would entail introducing an additional parameter $q$ representing the probability that $v_1=v_2$, and placing a uniform prior on that parameter $q\sim\Unif[0,1]$, say.

Another plausible scenario is that we have reason to believe the velocities are similar, even if they are not exactly the same. This could be represented in a joint prior distribution over $(v_1,v_2)$ with positive correlation between the two.

In summary, Bayesian inference offers a flexible framework in which many models are possible, but the steps taken in MC's example are invalid.

\subsection{MC's examples concerning acausality (in their Appendix F) are invalid: their models have not been set up correctly}
The two examples given by MC in Appendix F, which are used to exhibit ``acausality", are invalid.

The conclusion they come to is that ``the computed prior distributions of [two parameters] vary with the forward relation"; but in actual fact they have presented (an attempt at computing) the posterior distribution, not the prior distribution. So, their conclusion should have been that the computed \textit{posterior} distributions of those parameters vary with the forward relation, and naturally this is to be expected.

It is difficult to make sense of their working. It seems that the data $d$, its mean given parameters $g(m)$, and the noise term $n=d-g(m)$ have all been conflated with each other. In addition, there may be a misunderstanding over what parameters are in Bayesian inference. MC says that there is only one parameter $m$, but they also talk of a prior over $d$ and introduce $\delta$ and $\lambda$, which are unobserved random variables. In actual fact, $d$ is an observed variable and should be regarded as data, while $m,\delta,\lambda$ are unobserved random variables and should be regarded as parameters, even if some of these are not parameters of interest.

Using our best guess at what was intended, we illustrate the correct approach. We start with the three parameters $\bm{\theta}=(m,\lambda,\delta)$, and we denote our single data point $y$. We draw a graph to illustrate the model in Figure \ref{fig:acausality}. Including such graphs is good practice, especially when we work with more complex hierarchical models than this, as it helps us to avoid getting confused when reasoning with priors, likelihoods and posteriors.

\begin{figure}[!ht]
    \centering
    \begin{tikzpicture}[
        every node/.style={circle, draw, minimum size=8mm},
        >={Stealth[length=4pt,width=6pt]}
    ]
    \node (delta) at (0,0) {$\delta$};
    \node (m) at (0,-1.5) {$m$};
    \node (lambda) at (1.5,-1.5) {$\lambda$};
    \node[fill=gray!30] (y) at (0.75,-3) {$y$};
    
    \draw[->] (delta) -- (m);
    \draw[->] (m) -- (y);
    \draw[->] (lambda) -- (y);
    
    \end{tikzpicture}
    \caption{A graphical illustration of the model. The shaded node represents an observed quantity, i.e.\ data. All unshaded nodes represent unknown quantities, i.e. parameters; one parameter ($\delta$) may be called a hyperparameter and two ($\delta,\lambda$) might be thought of as nuisance parameters. The dependencies in the model specification are shown with arrows.}
    \label{fig:acausality}
\end{figure}
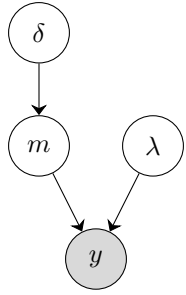

We have prior distributions
$$m|\delta\sim\Normal(m_0,\delta^2), \quad\delta\sim\Normal(0,\sigma_\delta^2), \quad\lambda\sim\Normal(0,\sigma_\lambda^2),
$$
implying the joint prior density
$$\pi(\bm{\theta}) 
= p_{m|\lambda}(m|\lambda)p_{\lambda}(\lambda)p_{\delta}(\delta) 
\propto \frac{1}{\delta}\exp\left(-\frac{1}{2}\left(\frac{(m-m_0)^2}{\delta^2}+\frac{\lambda^2}{\sigma_\lambda^2}+\frac{\delta^2}{\sigma_\delta^2}\right)\right).$$
Note that there is no reference to any data in the prior. We define the distribution of the data by
$$y\sim\Normal(km,\lambda^2)$$
which is equivalent to $\epsilon\sim\Normal(0,\lambda^2)$, $y=km+\epsilon$. This implies the likelihood is
$$f(y|\bm{\theta})=
\frac{1}{\sqrt{2\pi \lambda^2}}
\exp\left(
-\frac{(y - km)^2}{2\lambda^2}
\right).
$$

Having set up our model explicitly, we will now be able to use Bayes' theorem (\ref{eqn:bayesrule}) to find the posterior distribution, $\pi(m,\lambda,\delta|y)$, over the parameters $(m,\lambda,\delta)$.

\subsection{Discussion}
When performing Bayesian data analysis, it is helpful to bear in mind the sequence of steps that form the Bayesian workflow \citep{Gelman2020bayesianworkflow}, and this workflow can very roughly be condensed into three steps \citep{Gelman2013bda}:
\begin{enumerate}
    \item Set up a model that specifies the parameters $\theta$, data $X$, prior density $p(\theta)$ and likelihood $p(x|\theta)$;
    \item Upon observing a realisation of the data, compute the posterior density $p(\theta|x)$;
    \item Evaluate the model fit, and possibly decide to fit alternative models, returning to Step 1.
\end{enumerate}

The first step is important. It is the job of the modeller to set up the model in a way that is appropriate for their inference goals, whether that may be to estimate a parameter, or to measure the evidence for different hypotheses, or to build a predictive model, or something else. Choosing an appropriate model can be hard. Nonetheless, even if it is not possible to achieve a perfect representation of the real-world system, it may still be possible to model the system in a useful way.

\section{Bayesian inference remains invariant to reparametrisations of the data space} \label{section:4}
The remaining examples of MC are based on a change of variables in the data space. In each example, they perform a calculation using two different parametrisations of the data space and obtain inconsistent results. We find that the contradiction stems from their incorrect treatment of the change of variables. In fact, Bayesian inference does not depend on reparametrisations of the data space. We first demonstrate why this is the case; we then shed light on their error.

Consider again the basic framework where we have prior density $\pi(\theta)$, and suppose the data $Y$ is a random vector in $E_Y\subset\Re^n$ with density $f(y|\theta)$. (Recall that such densities can be defined in the way described in Section \ref{subsection:towards}.) The posterior is given by Equation (\ref{eqn:bayesrule}).

Now suppose there is a reparametrisation of our data, living in data space $E_X\subset\Re^n$, and $X=\phi(Y)$ where $\phi:E_Y\to E_X$ is a bijection with continuous partial derivatives
$J_{ij}(y)=\frac{\der\phi_i(y)}{\der y_j}$.
Then the density in this reparametrisation can be written as
$$\tilde{f}(x|\theta)=|J(\phi^{-1}(x))|^{-1}f(\phi^{-1}(x)|\theta),$$
where $J(y)=\bigl(J_{ij}(y)\bigr)_{1\le i,j\le n}$ is the Jacobian matrix and $|J(y)|$ denotes its determinant \citep[see][for a rigorous treatment]{Billingsley1986}. We say that the determinant of the Jacobian matrix is the Jacobian. Then, the posterior is given by
$$\pi(\theta|x)\propto\pi(\theta)\tilde{f}(x|\theta)=\frac{\pi(\theta)f(\phi^{-1}(x)|\theta)}{|J(\phi^{-1}(x))|}.$$
Next, the key observation is that the Jacobian does not depend on the parameters $\theta$; it only depends on the data. Therefore, it may be absorbed into the proportionality constant, and we obtain
$$\pi(\theta|x)\propto\pi(\theta)f(\phi^{-1}(x)|\theta).$$ Bearing in mind that $X=\phi(Y)$, this agrees exactly with what we already knew in Equation (\ref{eqn:bayesrule}). Thus, reparametrising the data space does not lead to any inconsistencies in the posterior.\footnote{It is worth pointing out that reparametrising the parameter space also does not lead to any inconsistencies in the posterior. To see this, suppose $\theta$ lives in $E_\theta\subset\Re^p$ and its alternative parametrisation is $\eta$ in $E_\eta\subset\Re^p$, related by the bijection $\psi: E_\eta\to E_\theta$. Then the priors will be related by $\check{\pi}(\eta)=\pi(\psi(\eta))|K(\eta)|$ and the posteriors will be related by $\check{\pi}(\eta|y)=\pi(\psi(\eta)|y)|K(\eta)|$, where $K$ denotes the Jacobian for $\psi$. Thus, if the priors are equivalent, then the posteriors will be equivalent. Beware that a uniform prior on $\theta$ is not in general equivalent to a uniform prior on $\eta$; the property of being ``uniform" is not invariant to reparametrisation.}

As a result, there are no inconsistencies when we use this posterior for MAP estimation or Bayes factors; likewise in empirical Bayes there are no inconsistencies when we use a marginal likelihood to obtain point estimates of hyperparameters. 

Orthogonal to MC's argument, there is a genuine criticism of empirical Bayes, which is that it uses the data twice: once to estimate hyperparameters, and again to obtain the posterior under a model with fixed hyperparameters. This means that uncertainty in the hyperparameters has been ignored, and we are overfitting to the data. However, empirical Bayes can still be useful if fitting a fully hierarchical model with uncertainty in the hyperparameters would be computationally prohibitive. Also, when there is sufficient data, there should generally be little uncertainty in the hyperparameters anyway.

\subsection{MC's examples involving MAP estimation, empirical Bayes and Bayes factors (in their Appendices B--D) are invalid: reparametrisation is performed incorrectly}
What we say is incompatible with the claims of inconsistency that MC makes with their examples involving MAP estimation, empirical Bayes and Bayes factors (detailed in their Appendices B, C and D, respectively). How, then, have they gone wrong? 

Essentially, they have used the wrong Jacobian. In the passage above, we have made explicit the fact that the Jacobian $J(y)$ depends on the data $y=\phi^{-1}(x)$, but not on observed parameters. However, they have evaluated the Jacobian not at the observed data $y$ but at the value predicted by the forward model $g(m)$. This is invalid, but the error has been obfuscated by their ``Formulation 2", which is not a fully precise formulation.

We remark that there is no benefit to a practitioner in using ``Formulation 2", and therefore it is unlikely for this error to be made in practice. Nonetheless, we shall attempt to make the error more obvious to the more interested reader by first introducing more precise notation to express their ``Formulation 2".


Generally, we are considering a particular class of model in which the data has the form $y=g(m)+\epsilon$, where $m$ consists of unknown parameters and $\epsilon$ has distribution possibly depending on unknown parameters $\theta_\epsilon$. Together, the parameters are $\theta=(m,\theta_\epsilon)$. At this point, the likelihood has the functional form $$f_{Y|\theta}(y|\theta)=h(y;g(m),\theta_\epsilon).$$
We additionally suppose that $\epsilon$ has a symmetrical distribution (or in symbols, $\epsilon\stackrel{d.}{=}-\epsilon$).
This has the consequence that the likelihood has the functional form
\begin{equation}
    f_{Y|\theta}(y|\theta)=h(y;g(m),\theta_\epsilon)=h(g(m);y,\theta_\epsilon),
    \label{eqn:likelihoodfunctionalform}
\end{equation}
and therefore the posterior has the functional form
\begin{equation}
    \pi(\theta|y)\propto\pi(\theta)h(g(m);y,\theta_\epsilon).
    \label{eqn:bayesrulefunctionalform}
\end{equation}

Now we are ready to perform the change of variables in the data space, as is attempted in Appendices B, C and D of their manuscript. Suppose (as before) that there is a reparametrisation of our data, living in data space $E_X\subset\Re^n$, and $X=\phi(Y)$ where $\phi:E_Y\to E_X$ is a bijection with continuous partial derivatives $J_{ij}(y)=\frac{\der\phi_i(y)}{\der y_j}$. Since the likelihood in the $y$-parametrisation has the form (\ref{eqn:likelihoodfunctionalform}), the likelihood in the $x$-parametrisation has the form 
\begin{equation}
    \tilde{f}_{X|\theta}(x|\theta)=\frac{h(\phi^{-1}(x);g(m),\theta_\epsilon)}{|J(\phi^{-1}(x))|}=\frac{h(g(m);\phi^{-1}(x),\theta_\epsilon)}{|J(\phi^{-1}(x))|},
\end{equation}
leading to the appropriate posterior (\ref{eqn:bayesrulefunctionalform}).

However, one might na\"ively think that Equation (\ref{eqn:likelihoodfunctionalform}) expresses a rule that ``whenever we see $x$ and $g(m)$ we may swap them around", in which case we will end up with the Jacobian being evaluated at $g(m)$ instead of $\phi^{-1}(y)$, as has happened in the examples of MC's manuscript. A potential source of confusion is in Equation (5) of their manuscript, where the dependence on the data has silently been omitted from the right-hand side; throughout the rest of the text, the dependence of the Jacobian on the data has been omitted.

\section{What are some genuine advantages and disadvantages of using Bayesian inference?} \label{section:5}
One barrier to adopting a Bayesian approach is that it requires a certain level of statistical literacy. We point out that there are a number of very accessible textbooks for Bayesian inference that may help to overcome this barrier, including \cite{Gelman2013bda}, \cite{lambert2018student}, and \cite{McElreath2020rethinking}.

Besides this, when deciding whether to use Bayesian inference in a statistical problem, there are a number of factors to consider. In this stand-alone section, we would like to direct attention towards legitimate issues.

\paragraph{Philosophical interpretation.} As individuals, we each possess uncertainties about the world around us. These uncertainties are important, as they inform how we make decisions. As we go about the world making observations and learning information, we update our beliefs. The Bayesian framework formalises this intuition. This is a compelling philosophical reason to choose a Bayesian approach.
On the other hand, in favour of frequentist statistics, one might argue that parameters should be viewed as fixed but unknown features of the world, rather than treating parameters as random entities. In addition, one might object to the fact that Bayesian inference is subjective, since results depend on the subjective choice of prior. 

\paragraph{Uncertainty quantification.} A key advantage of the Bayesian paradigm is how it provides uncertainty quantification. Bayesian inference is able to answer questions like ``in light of this data, to what degree do we believe that hypothesis $A$ holds?" and make statements like ``under our model, there is a 95\% chance that the true parameters lie in region $R$". These are arguably more natural and interpretable than their frequentist counterparts (``if hypothesis $A$ were true, what would be the probability of observing data more extreme than this?" and ``the procedure used to construct region $R$ has 95\% coverage probability"). 
The Bayesian framework also makes it straightforward to incorporate unobserved latent variables, missing data, and learnable hyperparameters; these are all modelled using additional parameters, and uncertainty in these parameters is naturally propagated through the model. In addition, even when data are very scarce and there is very high uncertainty as a result, Bayesian inference can provide principled uncertainty quantification. 
In spite of all this, we should bear in mind that Bayesian inference cannot find the ground truth parameters if the model was misspecified to begin with; the resulting posterior distribution may confidently point to the wrong parameter values. The results may still be useful, as the posterior mass will concentrate towards the parameters that give the best possible fit to the data. (This is why in Section \ref{section:3} we highlighted that the modeller should take care to choose an appropriate model. Of course, this issue is not unique to Bayesian inference.)

\paragraph{Prior specification.} In some cases, the ability to specify a prior can be attractive, as it allows us to encode prior information coming from domain knowledge or previous experiments. Informative priors can also provide a natural source of regularisation, favouring parsimony and guarding against overfitting in more complex models. However, in practice, specifying an appropriate prior can be very challenging, especially when there is very little prior information, and this blessing turns into a curse. Different choices of priors will, through Bayes' rule, lead to different posteriors; this issue is known as prior sensitivity. Prior sensitivity can be particularly severe when performing model comparison with Bayes factors, whereas predictive approaches based on the posterior predictive distribution (including PSIS-LOO for model comparison) tend to be more robust. It is interesting to note that for a broad class of parametric models, the Bayesian approach agrees with the frequentist approach as the amount of data grows to infinity, and the effect of the prior washes out.

\paragraph{Computational difficulties.} Up to now, we have not discussed computational considerations. For simple models like our exponential waiting times example, the posterior distribution can be derived analytically. However, for more complex models, analytic solutions are typically unavailable. In such cases, direct evaluation of the posterior density requires computation of the normalising constant in Bayes’ theorem. This normalising constant is the marginal likelihood $p_Y(y)=\int \pi(\theta)f(y|\theta) \der \theta$ which involves integrating over a potentially high-dimensional parameter space; this is often computationally intractable. One solution is to sample parameters from the posterior distribution using a Markov chain Monte Carlo (MCMC) algorithm. Over the past few decades, developments in MCMC and software implementations have allowed practicable Bayesian inference. Even so, MCMC methods can still be computationally intensive. Alternative approaches can reduce the computational burden, at the expense of only targeting an approximation of the posterior. MCMC may also struggle to explore the parameter space efficiently if there are pathologies in the shape of the posterior function.

\section{Conclusions} \label{section:6}
We state the key conclusions of this note.

\begin{enumerate}
    \item Readers should be assured that the claims of \cite{Mosegaard2024} are invalid. Bayesian inference remains a valid and well-principled approach to a range of physical inverse problems.
    \item One important flaw in MC's argument was in trying to condition on a zero-probability event. We stress that this cannot be done, but what can be done is conditioning on a random variable.
    \item Probability densities can be defined not just for continuous random variables, but for more general random elements too. In practice, we do work with these more general densities: our prior functions, posterior functions, and likelihood functions are all probability densities over an appropriate space.
    \item When joint densities exist, so too do conditional densities. The notion of conditional density is firmly supported by theory.
    \item While Bayesian inference is a flexible approach providing principled uncertainty quantification, it does come with challenges, including prior sensitivity and computational difficulties.
\end{enumerate}

\section*{Contributors}
AY and BL conceived of this article. AY wrote the original draft. All authors contributed to reviewing and editing. We would also like to thank Kitty Knight and Alice Luo for their thoughtful feedback on earlier drafts.

\bibliography{ref.bib}

\end{document}